\setlist{nosep}
\providecommand{\doi}[1]{doi: \href{http://dx.doi.org/#1}{\nolinkurl{#1}}}
\tikzstyle{player1}=[draw,rounded rectangle, minimum size=5mm]
\tikzstyle{player2}=[draw,rectangle,minimum size=5mm]
\tikzstyle{widget}=[draw,ellipse,dashed,minimum size=6mm]
\tikzset{every loop/.style={looseness=7}}
\renewcommand\geq{\geqslant}
\renewcommand\leq{\leqslant}
\newcommand\Z{\mathbb Z}
\newcommand\N{\mathbb N}
\newcommand\tuple[1]{\langle #1 \rangle}
\def\bar#1{\ensuremath{\overline{#1}}}
\newcommand\vertices{V}
\newcommand\edges{E}
\newcommand\edgeweights{\omega}
\newcommand\game{G}
\newcommand\Payoff{\mathbf{P}}
\ProvideDocumentCommand{\gameEx}{o}{\IfNoValueTF{#1}{\tuple{\vertices,\edges,\edgeweights,\Payoff}}{\tuple{\vertices,\edges,\edgeweights,#1}}} 
\ProvideDocumentCommand{\RPayoff}{o}{\IfNoValueTF{#1}{\mathbf{RP}}{#1\text{-}\mathbf{RP}}}
\newcommand\Actions{A}
\newcommand\Next{\textit{Next}}
\newcommand\Targets{F}
\newcommand\graphEx{\tuple{\vertices,\Targets,(\Actions_i)_{i\leq N},\edges,\Next,(\edgeweights_i)_{i\leq N}}}
\newcommand\play{\pi}
\newcommand\strategy{\sigma}
\newcommand\outcomes{\mathsf{Play}}
\newcommand\profStrat{\vec \strategy}
\newcommand\cost{\textit{cost}}
\newcommand\TP{\textnormal{\textbf{TP}}}
\ProvideDocumentCommand{\MCR}{o}{\IfNoValueTF{#1}{\mathbf{MCR}}{#1\text{-}\mathbf{MCR}}}
\ProvideDocumentCommand{\RDP}{o}{\IfNoValueTF{#1}{\mathbf{RDP}}{#1\text{-}\mathbf{RDP}}}
\ProvideDocumentCommand{\RMP}{o}{\IfNoValueTF{#1}{\mathbf{RMP}}{#1\text{-}\mathbf{RMP}}}
\newcommand\target{\textnormal{\texttt{t}}}
\newtheorem{proposition}{Proposition}
\newtheorem{lemma}{Lemma}
\newtheorem{theorem}{Theorem}
\theoremstyle{definition}
\newtheorem{example}{Example}
\title{Efficient Energy Distribution in a Smart Grid\\ using
  Multi-Player Games%
\thanks{The research leading to these results has received funding from the European Union Seventh Framework
Programme (FP7/2007-2013) under Grant Agreement n°601148 (CASSTING).}}
\author{
Thomas Brihaye
\institute{UMONS\\ Mons, Belgium}
\email{thomas.brihaye@umons.ac.be}
\and
Amit Kumar Dhar
\institute{IIITA\\ Allahabad, India}
\email{amitdhar@iiita.ac.in}
\and
Gilles Geeraerts
\institute{ULB\\ Brussels, Belgium}
\email{gigeerae@ulb.ac.be}
\and
Axel Haddad
\institute{UMONS\\ Mons, Belgium}
\email{axel.haddad@umons.ac.be}
\and
Benjamin Monmege
\institute{LIF, Aix-Marseille Universit\'e, CNRS\\ Marseille, France}
\email{benjamin.monmege@lif.univ-mrs.fr}
}
\def\titlerunning{Efficient Energy Distribution in a Smart Grid using
  Multi-Player Games} 
\def\authorrunning{T. Brihaye, A. K. Dhar, G. Geeraerts, \\ A. Haddad, 
  B. Monmege}
\begin{document}
\maketitle
\def\authorrunning{T. Brihaye, A. K. Dhar, G. Geeraerts, A. Haddad, 
  B. Monmege}
\markboth{\hfill\titlerunning}{\authorrunning\hfill} 

\begin{abstract}
  Algorithms and models based on game theory have nowadays become
  prominent techniques for the design of digital controllers for
  critical systems. Indeed, such techniques enable \emph{automatic
    synthesis}: given a model of the environment and a property that
  the controller must enforce, those techniques automatically produce
  a correct controller, when it exists. In the present paper, we
  consider a class of concurrent, weighted, multi-player games that are
  well-suited to model and study the interactions of several agents
  who are competing for some measurable resources like energy. 
  We prove that a subclass of those games always admit a
  Nash equilibrium, i.e.\ a situation in which all players play in such
  a way that they have no incentive to deviate. Moreover, the
  strategies yielding those Nash equilibria have a special structure:
  when one of the agents deviate from the equilibrium, all the others
  form a coalition that will enforce a retaliation mechanism that
  punishes the deviant agent.
  We apply those results to a real-life case study in which several
  smart houses that produce their own energy with solar panels, and
  can share this energy among them in micro-grid, must distribute the
  use of this energy along the day in order to avoid consuming
  electricity that must be bought from the global grid. We demonstrate
  that our theory allows one to synthesise an efficient controller for
  these houses: using penalties to be paid in the utility bill as an
  incentive, we force the houses to follow a pre-computed schedule
  that maximises the proportion of the locally produced energy that
  is consumed.
\end{abstract}

\section{Introduction}

A recent and well-established research direction in the field of the
design of digital controller for critical systems consists in applying
concepts, models and algorithms borrowed from \emph{game theory} to
perform \emph{automatic synthesis (construction)} of correct
controllers. The contributions of the present article are part of this
research effort. In the setting of automatic synthesis, the controller
we want to build is a \emph{player} (using the vocabulary of game
theory), and the specification that the controller should satisfy is
cast as a game objective that the controller player should enforce at
all times, regardless of the behaviour of the environment. The
environment itself is modeled as another player (or a set of
players). Thus, computing automatically a correct and formally
validated (with respect to the specification) controller boils down to computing a
\emph{winning strategy} for the controller, i.e.\ a strategy that
ensures this player to win the game whatever the other players play.
%
%
System synthesis through a game-based approach has nowadays reached a
fair amount of maturity, in particular thanks to the development of
several tools (such as UppAal TiGa~\cite{BCDFLL07}, UppAal
Stratego~\cite{DJLMT15} and Prism Games~\cite{CFK+13}) that have been
applied successfully to real life case studies (see for
instance~\cite{DGJLR10,CKSW13,LMT15}).
Until recently, however, the research has mainly focused on two-player
games, where the players (the system and the environment) have
antagonistic objectives.  This approach allows one to model and reason
on centralised control only.  Although multi-player games have been
studied from an algorithmic point of view, strategic forms of those
games have been mainly considered, and the study of multi-player games
played on graphs---the kind of model we need in our setting---is
relatively recent.  This research direction is part of the CASSTING
project\footnote{\url{http://www.cassting-project.eu/}}, whose aim is
to propose new techniques for the synthesis of \emph{collective
  adaptive systems}. Such systems are decentralised and consist of
several modules/agents interacting with each other.  While the idea of
using games remains, using multi-player games for synthesis of
collective adaptive systems represents a huge leap in game theory for
synthesis.  Indeed, in adversarial games the main goal is to find
winning (or optimal in a quantitative setting) strategies, whereas in
multi-player games, one wants to synthesise controllers by computing
equilibria (such as Nash equilibria \cite{Nash}) characterising an
adequate behaviour of each agent.  \medskip

In this article, we consider a class of \emph{quantitative
  multi-player games} that are well-suited to model systems where a
quantity grows or decreases along the plays (this quantity can model
some energy level, economic gain/loss, or any other measurable
resource). More precisely, the game models a multi-state system where
the players choose their actions concurrently (at the same time), and
the next state is a function of the current state and the players'
actions. Going from one state to another can result in a positive or
negative cost for the players.  One can give two semantics to these
games, either an infinite horizon semantics where the plays are
infinite and the players want to minimize the limit
(inferior/superior) of the partial sum of the costs; or a finite
horizon semantics where the goal of each player is to reach some
target state, and minimize the sum of the costs paid before reaching
the target.  In the following, we focus on the latter semantics, more
fitted to the case study, we will be interested in.

We start by establishing some properties of these games.  Although
there may not always exist Nash equilibria in these games, we describe
a subclass in which there always exist some. First we observe that
when several players play at the same time concurrently, one can
encounter a situation similar to the rock-paper-scissor game, in which
there is no (pure) Nash equilibrium. However, even in a turn-based
game (i.e.\ a game in which, in each state, only one player is in
charge of choosing the next state), there may not always exist a Nash
equilibrium.  More precisely, we show that---unlike in some other
classes of games---it is possible that each player cannot
independently guarantee his cost to decrease arbitrarily, while a
coalition of all players can achieve this goal.  We then show that
this is the only situation that prevents Nash equilibria from existing
in those games: we prove that, when the cost of any play is bounded from
below by a fixed threshold, then a Nash equilibrium exists in the
game.
\medskip

We demonstrate the applicability of this theory in a practical
situation.  We consider a case study introduced by an industrial
partner of the CASSTING project, and model it in a game formalism in
order to build a controller fulfilling a specific set of goals. The
case study consists of a local grid of eight houses equipped with
solar panels.  The solar panels produce different amounts of energy
during the day. When they need to consume energy, the houses can
either rely on energy produced by the solar panels (their own or one
of the seven other houses') or buy it from the global grid. The aim
of the case study is to minimise the use of energy bought from the
global grid as a whole, while preserving the incentive of each house
to share the energy produced by their solar panel with others, if not
used directly by them.  We assume that the energy produced by the
solar panel has to be used within a small interval of time and can not
be stored (a provision for storage of energy could be added at little
increase of modelling complexity).  Concretely, we want to generate a
controller producing a schedule of the different tasks of the houses,
such that each house has no incentive to deviate from this
schedule. For this, we assume that there are two types of
controllers. One global controller which has information about all the
houses, their requirements and their production. Also, there are local
controllers in every house communicating with the global controller
and controlling the tasks that take place in this house. Local
controllers have no information about the consumption or production of
the other houses: they are only aware of the energy produced by their
own solar panels and the energy requirement of the house at any
specific interval of the day. In our experiments, this schedule is
computed as a strategy in a multi-player concurrent game that:
\begin{inparaenum}[1.]
\item minimises the energy bought from outside; and
\item minimises the bill to be paid by each house.
\end{inparaenum}
We also assume that the houses are not bound to follow the schedule
and can deviate from it. However, such deviations could lead to a
severe increase in the overal consumption from the global grid (if,
for instance, a house decides to use its own energy locally instead of
injecting it on the local grid as prescribed by the schedule, then the
total amount of energy available on the local grid might be too low,
and energy might have to be bought from the global grid).  For this
reason, we devise proper incentive and a penalty mechanism ensuring
that the houses would not have any interest in deviating from the proposed schedule.


\section{Theoretical background}

We first introduce the class of multi-player games we are interested
in. We fix a number $N$ of players and let $\{1,\ldots,N\}$ be the set
of players. A \emph{concurrent min-cost reachability (MCR) game} is a
tuple $\graphEx$ where $\vertices$ is a finite set of vertices
partitioned into the sets $\vertices_1,\ldots,\vertices_N$,
$\Targets\subset \vertices$ is a subset of vertices called
\emph{targets}, 
for every
vertex $v\in\vertices$,
$\Actions_i$ is a finite set of actions for player
$i$, $\edges\subseteq \vertices\times \vertices$ is a set of
\emph{directed edges}, the set of successors of $v$ by $\edges$ is
denoted by $\edges(v) = \{v'\in\vertices\mid (v,v')\in \edges\}$,
$\Next: \vertices \times \prod_i A_i \rightarrow \vertices$ is a
mapping such that for all $v$ and $a_1,\ldots,a_N$,
$\Next(v,a_1,\ldots,a_N)\in \edges(v)$,
$\edgeweights_i\colon \edges \to \Z$ is the \emph{weight function} for
player~$i$, associating an integer weight with each edge. 
Without loss of generality, we assume that every graph is
deadlock-free, i.e.\ for all vertices~$v$, $\edges(v)\neq\emptyset$.
In the following we let $\Actions=\prod_{i\leq N}\Actions_i$.
A \emph{finite play} is a finite sequence of vertices
$\play=v_0v_1\cdots v_k$ such that for all $0\leq i<k$,
$(v_i,v_{i+1})\in \edges$. 
A \emph{play} is an infinite sequence of
vertices $\play = v_0v_1\cdots$ such that every finite prefix
$v_0\cdots v_k$, denoted by $\play[k]$, is a finite play.

The total-payoff of a finite play $\play=v_0 v_1 \cdots v_k$ for
player~$i$ is obtained by summing up the weights along~$\play$, i.e.\
$\TP_i(\play) = \sum_{\ell=0}^{k-1}
\edgeweights_i(v_\ell,v_{\ell+1})$.
The total-payoff of a play $\play$ is obtained by taking the limit
over the partial sums, i.e.\
$\TP_i(\play) = \liminf_{k\to \infty} \TP_i(\play[k])$. The cost of a
play, $\cost_i(\play)$ is $+\infty$ if $\play$ does not visit any
target, and $\TP_i(v_0 v_1 \cdots v_\ell)$ otherwise, with $\ell$ the
least index such that $v_\ell\in \Targets$: it reflects that players
want to minimise their cost, subject to the imperative of reaching the
target as a primary objective.

A \emph{strategy} for player~$i$ is a mapping
$\strategy\colon \vertices^+ \to A_i$. 
A play or finite play
$\play = v_0v_1\cdots$ conforms to a strategy $\strategy$ of player
$i$ if for all $k$, there exists $(a_1,\cdots,a_N)\in \Actions$ such
that $a_i = \strategy(v_0,\ldots,v_k)$, and
$v_{k+1} = \Next(v,(a_1,\ldots,a_N))$.  A profile of strategies is a
tuple $(\strategy_1,\ldots,\strategy_N)$ where for all $i$,
$\strategy_i$ is a strategy of player~$i$. 
%
For all profiles of strategies
$\profStrat =(\strategy_1,\ldots,\strategy_N)$, for all vertices $v$,
we let $\outcomes(v,\profStrat)$ be the outcome of~$\profStrat$,
defined as the unique play conforming to $\strategy_i$ for all $i$,
and starting in~$v$, i.e.\ the play $v_0 v_1 \cdots$ such that $v_0=v$
and for all $\ell$, $v_{\ell+1} = \Next(v_\ell, (a_1,\ldots,a_N))$
where $a_i=\strategy_i(v_1\cdots v_\ell)$.

A profile of strategies $\profStrat =(\strategy_1,\ldots,\strategy_N)$
is a (pure) \emph{Nash equilibrium} from vertex $v$, if for all
players $i$, and for all strategies $\strategy'_i$,
$\cost_i(\outcomes(v,(\strategy_1,\ldots,\strategy'_i, \ldots,
\strategy_N))) \geq \cost_i(\outcomes(v,\profStrat))$.
Observe that we assume that the objective of every player is to minimise
its cost; thus, intuitively, a profile of strategies is a Nash
equilibrium if no player has an incentive to deviate.

We say that a vertex $v$ \emph{belongs} to some player $i$ if he is
the only one to choose the next vertex, i.e.\ for all pairs of actions
$(a_1,\ldots,a_N)$ and $(a'_1,\ldots,a'_N)$, if $a_i = a'_i$ then
$\Next(v,(a_1,\ldots,a_N)) = \Next(v,(a'_1,\ldots,a'_N))$.  A game is
said to be \emph{turned-based} if each vertex belongs to some
player. When considering turned-based games, instead of actions, we
say that the players to whom the current vertex belongs chooses
directly the next vertex.

\subsection{Nash equilibria do not always exist\ldots}

A natural question is the existence of Nash equilibria in the min-cost
reachability games we have just defined. In order to understand
precisely what are the conditions that prevent the existence of Nash
equilibria, we present some examples of min-cost reachability games in
which we can show that no such equilibria exist. We also recall
previous results identifying classes of games where such equilibria
are guaranteed to exist. We start with a game that is \emph{not}
turn-based and admits \emph{no Nash equilibria}.

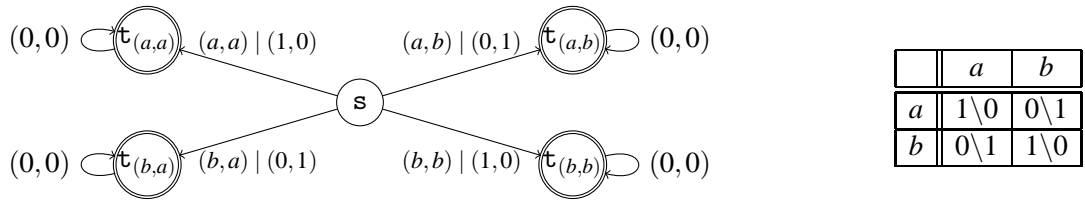
\begin{figure}[t]
  \centering
  \begin{tikzpicture}[node distance=4cm]
    \node[draw,circle] (s) {$\texttt s$};
    \node[draw,circle,double,above left of=s,yshift=-2cm] (aa) {\makebox[10pt][c]{$\texttt t_{(a,a)}$}};
    \node[draw,circle,double,above right of=s,yshift=-2cm] (ab) {\makebox[10pt][c]{$\texttt t_{(a,b)}$}};
    \node[draw,circle,double,below left of=s,yshift=2cm] (ba) {\makebox[10pt][c]{$\texttt t_{(b,a)}$}};
    \node[draw,circle,double,below right of=s,yshift=2cm] (bb) {\makebox[10pt][c]{$\texttt t_{(b,b)}$}};

    \draw[->] (s) -- node[above,yshift=1mm] {\footnotesize$(a,a)\mid (1,0)$} (aa);
    \draw[->] (s) -- node[below,yshift=-1mm] {\footnotesize$(b,b)\mid (1,0)$} (bb);
    \draw[->] (s) -- node[above,yshift=1mm] {\footnotesize$(a,b)\mid (0,1)$} (ab);
    \draw[->] (s) -- node[below,yshift=-1mm] {\footnotesize$(b,a)\mid
      (0,1)$} (ba);
    \draw[->] (aa) edge[loop left] node[left] {$(0,0)$} (aa);
    \draw[->] (ab) edge[loop right] node[right] {$(0,0)$} (ab);
    \draw[->] (ba) edge[loop left] node[left] {$(0,0)$} (ba);
    \draw[->] (bb) edge[loop right] node[right] {$(0,0)$} (bb);
 \end{tikzpicture}
 \hspace{2cm}
 \raisebox{3mm}{\begin{tikzpicture}
   \node () {$
     \begin{array}{|c||c|c|}\hline
        & a & b \\\hline\hline
       a & 1\backslash 0 & 0\backslash 1  \\\hline
       b & 0\backslash 1 & 1\backslash 0  \\\hline
     \end{array}$};
 \end{tikzpicture}}
  \caption{A game without pure Nash equilibrium representing the
    strategic game described by the matrix on the right: we do not
    depict the actions on the loops over targets for conciseness.}
  \label{fig:noNash}
\end{figure}

\begin{example}
  Consider the game in \figurename~\ref{fig:noNash} with four target
  vertices $\texttt{t}_{(a,a)}$, $\texttt{t}_{(a,b)}$,
  $\texttt{t}_{(b,a)}$, $\texttt{t}_{(b,b)}$, and one additional
  vertex $\texttt{s}$. Assume that there are two players, and each has
  two possible actions: $a$ and $b$.  From $\texttt{s}$ the pair of
  actions $(\alpha_1,\alpha_2)$ leads to
  $\texttt{t}_{(\alpha_1,\alpha_2)}$.  If both players choose the same
  action, the cost for player~1 is $1$ and the one for player~2 is
  $0$; if both players choose different actions, the cost for player~1
  is $0$ and the one for player~2 is $1$. There is clearly no (pure)
  Nash equilibrium in this game from $\texttt{s}$ since for all
  profiles of strategies, either player~1 or player~2 would pay less
  with another strategy\footnote{It is however possible to find Nash
    equilibria that use randomisation (so-called mixed strategies),
    but we do not consider such objects in this work.}.
\end{example}

In a turn-based setting, one can also easily exhibit examples with no
pure Nash equilibria.

\begin{example}\label{noNE-tb}
  Consider a one player game with two vertices $v_1$ and $v_2$ where
  the latter is the only target. The set of edges is
  $\{(v_1,v_1), (v_1,v_2),(v_2,v_2)\}$, all with cost $-1$. In other
  words, from $v_1$, the player can either choose to loop, and get a
  reward (since he seeks to minimise his cost); or to reach the target
  $v_2$ (in which case the play formally continues with no influence
  on the cost). In this game, a strategy from vertex~$v_1$ can thus be
  described by the number of times he will loop on $v_1$ before going
  to $v_2$. If he never reaches $v_2$, he pays $+\infty$ which is
  clearly bad. If he loops $n$ times, a strictly better strategy would
  be to loop $n+1$ times, therefore there is no Nash equilibrium in
  that game.
\end{example}

In~\cite{Pri13,BPS13}, the authors introduce a large class of
turn-based games for which they prove that a pure Nash equilibrium
always exists.  In particular, this result can be used to show that
every turned-based min-cost reachability game with only positive costs
admits a (pure) Nash equilibrium. From Example~\ref{noNE-tb}, we
already know that when there are negative costs, this result does not
hold anymore.  In this example, the (only) player has a family of
strategies that allows him to secure a cost which is arbitrary low,
hence, the absence of Nash equilibria is not too surprising.  Let us
now exhibit a third, two-player example in which no player has a
strategy to guarantee, individually, an arbitrary low cost; but still
arbitrary low costs can be secured when the players cooperate. Again,
this phenomenon forbid the existence of Nash equilibria.


Note that here, we only look at pure Nash equilibria.  In the general
setting of mixed strategies, i.e.\ where the players pick randomly a
strategy according to some probability distribution over the set of
pure strategies, there is a Nash equilibrium in this game.  Indeed if
we let $\strategy^n$ the strategy consisting in looping $n-1$ times
around $v_1$ and then going to $v_2$ (ensuring a cost of $-n$), the
distribution consisting in picking $\strategy^n$ with probability
$\frac{6}{(\pi n)^2}$ (one can easily check that it is a distribution)
ensures an expected cost of $-\infty$.

\begin{example}\label{ex:3}
  Let $G$ be a turn-based game with two players $1$ and $2$, and three
  vertices $A$, $B$, and $C$. Vertex~$C$ is the only target. $A$ and
  $C$ belong to player~$1$ and $B$ belongs to player~$2$. The edges
  and the weight function are depicted in \figurename~\ref{fig:game}
  (e.g.\ $\edgeweights_1(A,C) = 0$ and $\edgeweights_2(A,C) = -1$).
\end{example}


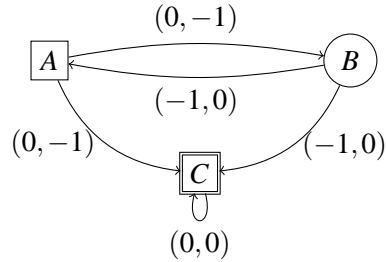
\begin{figure}\centering
\begin{tikzpicture}
\node[draw, rectangle] (A) at (-2,1) {$A$};
\node[draw, circle] (B) at (2,1) {$B$};
\node[draw, double, rectangle] (C) at (0,-0.5) {$C$};

\draw[->] (A) to[bend left=10] node[midway,above] {$(0,-1)$} (B);
\draw[->] (B) to[bend left=10] node[midway,below] {$(-1,0)$} (A);
\draw[->] (A) to[bend right] node[midway,left] {$(0,-1)$} (C);
\draw[->] (B) to[bend left] node[midway,right] {$(-1,0)$} (C);
\draw[->] (C) edge[loop below] node[midway,below] {$(0,0)$} (C);
\end{tikzpicture}
\caption{A turn-based MCR game with no Nash equilibria, but where no
  player can independently guarantee an arbitrary low cost.}
\label{fig:game}
\end{figure}

\begin{proposition}
There is no pure Nash equilibrium in the game $G$, neither from $A$ nor from $B$. 
\end{proposition}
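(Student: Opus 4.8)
The plan is to argue by contradiction. Suppose $\profStrat = (\strategy_1,\strategy_2)$ is a pure Nash equilibrium from $v$, with $v\in\{A,B\}$, and let $\play = v_0 v_1 \cdots = \outcomes(v,\profStrat)$ be its outcome. I would split the argument into two parts: first, that $\play$ must reach the target $C$; and second, that once it does, one of the players always has a profitable deviation.

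For the first part: if $\play$ never visits $C$ then, since from $A$ the only moves go to $B$ or $C$ and from $B$ only to $A$ or $C$, the play is forced to be the alternation $A B A B\cdots$ or $B A B A\cdots$, so $\cost_1(\play)=\cost_2(\play)=+\infty$. In either case $\play$ visits the vertex $A$ at some position $j$. Player~$1$ can then switch to the strategy $\strategy_1'$ that coincides with $\strategy_1$ on every strict prefix of $v_0\cdots v_j$ and plays the move $A\to C$ at history $v_0\cdots v_j$; the resulting play $v_0\cdots v_j\, C\, C\, \cdots$ has finite cost for player~$1$, strictly below $+\infty$, contradicting the equilibrium condition. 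Hence in any Nash equilibrium $\play$ reaches $C$; let $k\geq 1$ be least with $v_k = C$, so $v_{k-1}\in\{A,B\}$.

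For the second part, suppose $v_{k-1}=A$, i.e.\ the last edge into $C$ is $A\to C$, taken by player~$1$, so $\strategy_1(v_0\cdots v_{k-1}) = C$. I would let player~$1$ deviate to $\strategy_1'$ which agrees with $\strategy_1$ on all strict prefixes of $v_0\cdots v_{k-1}$, plays $A\to B$ at history $v_0\cdots v_{k-1}$, and plays $A\to C$ at history $v_0\cdots v_{k-1}\, B\, A$. Under $(\strategy_1',\strategy_2)$ the play follows $\play$ up to $v_{k-1}$ and then moves to $B$; the (total) strategy $\strategy_2$ now picks a successor of $B$, necessarily $C$ or $A$. If it is $C$, the new outcome is $v_0\cdots v_{k-1}\, B\, C\, C\, \cdots$; if it is $A$, then player~$1$ plays $C$ and the outcome is $v_0\cdots v_{k-1}\, B\, A\, C\, C\, \cdots$. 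Using $\edgeweights_1(A,B)=0$, $\edgeweights_1(B,C)=-1$, $\edgeweights_1(B,A)=-1$ and $\edgeweights_1(A,C)=0$, one checks in both cases that the cost for player~$1$ equals $\cost_1(\play)-1 < \cost_1(\play)$, contradicting the equilibrium. The case $v_{k-1}=B$ is symmetric: the game is invariant under exchanging $A$ with $B$ and player~$1$ with player~$2$, so player~$2$ has the analogous profitable deviation. Since $v\in\{A,B\}$ was arbitrary, there is no pure Nash equilibrium from $A$ nor from $B$.

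The only genuinely delicate point is the case distinction on the choice $\strategy_2$ makes at $B$ after player~$1$'s deviation: one must verify that whether $\strategy_2$ sends the token to $C$ or back to $A$, player~$1$ can still drive the play into $C$ while gaining exactly one unit of cost. Everything else is bookkeeping — ensuring the deviating strategies agree with the equilibrium ones long enough to reach the intended deviation vertex, and a two-line weight computation.
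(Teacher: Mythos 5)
Your proof is correct and follows essentially the same route as the paper's: in both, the outcome either never reaches $C$ (and the player owning the next occurrence of $A$, resp.\ $B$, deviates to reach the target at finite cost) or reaches $C$ via a last edge owned by one player, who then profits by taking one more trip around the $A$--$B$ cycle before exiting, gaining exactly one unit regardless of the opponent's reply. The only cosmetic differences are that the paper enumerates the explicit play shapes $(AB)^\omega$, $(AB)^nC^\omega$, $A(BA)^nC^\omega$ while you argue on a generic prefix and invoke the $A\leftrightarrow B$ / player~$1\leftrightarrow 2$ symmetry for the second case; both are sound.
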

\begin{proof}
  We do the proof for plays starting in~$A$, it is easily adapted to
  the other case. Note that the set of finite plays ending in $A$ is
  $A(BA)^*$ and the set of finite plays ending in $B$ is $(AB)^+$. Let
  $(\strategy_1,\strategy_2)$ be a profile of strategies and let
  $\play$ be its outcome. We consider several cases for $\play$: 
  \begin{enumerate}
  \item If $\play = (AB)^\omega$ then
    $\textit{Cost}_1(\play) = \textit{Cost}_2(\play) = +\infty$. Then,
    let $\strategy'_1$ be the strategy of player~$1$ defined by
    $\strategy'_1(A (BA)^n) = C$ for all $n$ and
    $\strategy'_1(\pi C) = C$ for all finite play $\pi$, then the
    outcome of $(\strategy'_1,\strategy_2)$ is $AC^\omega$ and
    $\textit{Cost}_1(AC^\omega) = 0$, which is strictly better than
    $+\infty$, and player $1$ has an incentive to deviate.
  
  \item If $\play = (AB)^n C^\omega$ for some $n$, then
    $\textit{Cost}_2(\play) = -n$. Let $\strategy'_2$ be the strategy
    obtained from $\strategy_2$ by letting $\strategy'_2((AB)^n) = A$
    and $\strategy'_2((AB)^{n+1}) = C$. One can easily check that the
    outcome of $(\strategy_1,\strategy'_2)$ is either
    $(AB)^{n+1} C^\omega$ or $(AB)^n A C^\omega$, and in both cases,
    the cost of this play for player~$2$ is $-(n+1)$ which is strictly
    better than $+\infty$, hence he has an incentive to deviate.

  \item Finally, if $\play = A (B A)^n C^\omega$ for some $n$ then
    $\textit{Cost}_1(\play) = -n$. Let $\strategy'_1$ be the strategy
    obtained from $\strategy_1$ by letting
    $\strategy'_1 (A (BA)^n) = B$ and $\strategy'_1(A(BA)^{n+1}) = C$.
    One can easily check that the outcome of
    $(\strategy'_1,\strategy_2)$ is either $(AB)^{n+1} C^\omega$ or
    $(AB)^{n+1} A C^\omega$, and in both cases, the cost of this play
    for player~$1$ is $-(n+1)$, hence, again, player~$1$ has an
    incentive to deviate.
  \end{enumerate}
  We conclude that, in all cases, one of the players has an incentive
  to deviate, hence no profile of strategies
  $(\strategy_1,\strategy_2)$ is a Nash equilibrium.
\end{proof}

\subsection{Lower-bounded set of costs}
As already outlined, the intuition behind the absence of Nash
equilibria in Examples~\ref{noNE-tb} and \ref{ex:3} is the existence
of plays with arbitrary low costs (even if these plays can not be
enforced by a single player, as shown by Example~\ref{ex:3}). We will
now show that this is indeed a necessary condition for the absence of
Nash equilibria. In other words: in a min-cost reachability game (with
arbitrary weights), \emph{if} the set of possible total-payoffs of
finite plays is bounded from below, \emph{then} a Nash equilibrium is
guaranteed to exist.

\begin{theorem}\label{thm:pureNash}
  Let $G$ be a turn-based MCR game respecting the following condition:
  for all players $i$, there exists $b_i\in \N$ such that all finite
  plays $\play$ satisfy $\TP_i(\play) \geq -b_i$. Then there exists a
  pure Nash equilibrium from all vertices in $G$.
\end{theorem}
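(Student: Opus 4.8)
The plan is to follow the punishment‑strategy paradigm announced in the introduction: exhibit a ``main play'' $\play^*$, let every player replay it, and, as soon as one player deviates, have the coalition of all the others punish him down to a value he could have secured on his own. For each player $i$, I will work with the two‑player zero‑sum min‑cost reachability game $\game_i$ on the same arena, in which player~$i$ minimises $\cost_i$ and the coalition $\{1,\dots,N\}\setminus\{i\}$ maximises it. The hypothesis $\TP_i(\play)\geq -b_i$ for all finite plays ensures that the value $\Val_i(v)$ of $\game_i$ from $v$ lies in $\{-b_i,-b_i+1,\dots\}\cup\{+\infty\}$; crucially it is never $-\infty$. I will rely on the following known facts about such games: $\game_i$ is determined; $\Val_i$ is characterised by the Bellman equations $\Val_i(v)=0$ for $v\in F$, $\Val_i(v)=\min_{v'\in\edges(v)}(\edgeweights_i(v,v')+\Val_i(v'))$ when $v\notin F$ belongs to~$i$, and $\Val_i(v)=\max_{v'\in\edges(v)}(\edgeweights_i(v,v')+\Val_i(v'))$ when $v\notin F$ belongs to some $j\neq i$ (with $z+\infty=+\infty$ for $z\in\Z$); player~$i$ has a uniformly memoryless optimal strategy $\strategy_i^*$, which therefore picks a Bellman‑optimal successor at each of its vertices and, started from any vertex of finite value, produces only plays reaching $F$; and, for each vertex $v$, the coalition has a strategy $\tau^{(i)}_v$ forcing $\cost_i\geq\Val_i(v)$ along every conforming play from $v$. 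For finite $\Val_i(v)$ this last point uses that a $\tfrac12$‑optimal coalition strategy is already optimal, since costs and values are integers; for $\Val_i(v)=+\infty$ the coalition simply keeps the play away from $F$.

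Next, fix an initial vertex $v_0$ and build the main play $\play^*=v_0v_1\cdots$ by letting the owner of the current vertex always apply its own $\strategy^*$: if $v_k$ belongs to $i$, set $v_{k+1}=\strategy_i^*(v_k)$. Either $\play^*$ reaches $F$, at a first index $\ell_0$, or it never does; in the latter case I claim $\Val_i(v_k)=+\infty$ for the owner $i$ of every $v_k$ along $\play^*$, for otherwise, from $v_k$ onward, player~$i$ would be playing its memoryless optimal strategy from a finite‑value vertex and $F$ would be reached regardless of the others. The key estimate, when $\play^*$ does reach $F$, is
\[
\TP_j(v_k v_{k+1}\cdots v_{\ell_0})\ \leq\ \Val_j(v_k)\qquad\text{for every player }j\text{ and every }k\leq\ell_0,
\]
proved by downward induction on $k$ from $k=\ell_0$, where both sides vanish; at the inductive step one writes $\TP_j(v_k\cdots v_{\ell_0})=\edgeweights_j(v_k,v_{k+1})+\TP_j(v_{k+1}\cdots v_{\ell_0})\leq\edgeweights_j(v_k,v_{k+1})+\Val_j(v_{k+1})$ and concludes via the Bellman equation at $v_k$ --- an equality if $v_k$ belongs to $j$ (then $v_{k+1}=\strategy_j^*(v_k)$ is Bellman‑optimal) and an inequality if $v_k$ is a maximiser vertex of $\game_j$.

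Finally I define $\profStrat=(\strategy_1,\dots,\strategy_N)$: everyone replays $\play^*$ while the history is a prefix of it; the first time the play leaves $\play^*$ --- necessarily because the owner $i$ of some $v_k$ on $\play^*$ moved to $v'\neq v_{k+1}$, a well‑defined $i$ since the game is turn‑based --- all players $j\neq i$ switch to $\tau^{(i)}_{v'}$. Its outcome from $v_0$ is $\play^*$. To check the Nash property, fix a player~$j$, an alternative $\strategy'_j$, and the resulting outcome $\rho$; if $\rho\neq\play^*$ then $\rho=v_0\cdots v_k\,v'\cdots$ with $v_k$ owned by $j$, $v'\neq v_{k+1}$, and the suffix after $v'$ conforming to $\tau^{(j)}_{v'}$. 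If the first target of $\rho$ lies inside the shared prefix $v_0\cdots v_k$, then $\cost_j(\rho)=\cost_j(\play^*)$. Otherwise $\cost_j(\rho)\geq\TP_j(v_0\cdots v_k)+\edgeweights_j(v_k,v')+\Val_j(v')$; if $\play^*$ reaches $F$ (so $\ell_0>k$ here), the key estimate at $v_k$ together with $\Val_j(v_k)\leq\edgeweights_j(v_k,v')+\Val_j(v')$ gives $\cost_j(\play^*)\leq\cost_j(\rho)$, and if $\play^*$ avoids $F$ then $\cost_j(\play^*)=+\infty$ while $\Val_j(v_k)=+\infty$ forces $\Val_j(v')=+\infty$, hence $\cost_j(\rho)=+\infty$. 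So no player profits from deviating, and since $v_0$ was arbitrary the theorem follows.

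I expect the only genuinely delicate part to be the first paragraph: isolating, from the existing theory of min‑cost reachability games, exactly the black boxes needed --- finiteness of the values under the lower‑bound hypothesis, a memoryless optimal strategy of the minimiser that is Bellman‑optimal and reaches $F$ from every finite‑value vertex, and a truly (not merely approximately) optimal punishing strategy for the coalition, the last relying essentially on the weights being integers. Everything downstream is the telescoping identity above together with a short case analysis on where the first target of a deviating play occurs.
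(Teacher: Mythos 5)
Your proof is correct in its architecture, but it takes a genuinely different route from the paper. The paper proves the theorem by a \emph{reduction}: for each player $i$ it unfolds $G$ into a game $G'$ whose vertices record the negative part of $i$'s running total-payoff (bounded by $b_i$ thanks to the hypothesis), so that $i$'s weights become non-negative, and then invokes the known existence of pure Nash equilibria in turn-based MCR games with non-negative weights~\cite{Pri13}; the equilibrium is obtained only indirectly, by pulling back through the construction. You instead build the equilibrium explicitly as a trigger-strategy profile, which is exactly the ``retaliation'' heuristic the paper describes in its section on characterising Nash equilibria outcomes but never uses in this proof: main play generated by each owner's optimal strategy in its coalition game, telescoping Bellman inequality $\TP_j(v_k\cdots v_{\ell_0})\leq\Val_j(v_k)$, and punishment by $\tau^{(j)}_{v'}$ after a deviation. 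Your approach buys an explicit equilibrium with the structure advertised in the abstract, at the price of importing nontrivial facts about two-player zero-sum MCR games with negative weights; the paper's approach buys a short proof at the price of opacity (and of equilibria that inherit the memory of the unfolding). Your case analysis and the telescoping argument are sound.

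The one place you should be careful is the black box you yourself flag: the existence of a \emph{memoryless} optimal strategy for the minimiser that is Bellman-optimal and reaches $\Targets$ from every finite-value vertex. This is \emph{false} for general MCR games with negative weights (the minimiser may need memory, or the value may be $-\infty$); it holds here only because the hypothesis $\TP_i(\play)\geq -b_i$ for all finite plays forces every cycle to have non-negative weight for player $i$, after which a potential/attractor argument (with tie-breaking among Bellman-optimal successors to avoid getting trapped on zero-weight cycles) gives the strategy you need. So this lemma secretly uses the theorem's hypothesis just as much as the paper's reduction does, and must be proved or precisely cited rather than quoted as a generic fact about MCR games. The remaining black boxes (determinacy, and upgrading a $\tfrac12$-optimal coalition strategy to an optimal one by integrality) are fine as you state them.
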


We now prove this theorem. For that purpose, let
$G=(\vertices,\{\target\}, (\Actions_j)_{j\leq
  N},\edges,\Next,(\edgeweights_j)_{j\leq N})$
be a concurrent MCR game (we will restrict ourselves to a turn-based
game when necessary). For the sake of simplicity, we assume here that
there is a unique target $\target$ for all players, and the only
outgoing edge from~$\target$ is the loop $(\target,\target)$. Note
that the following construction would hold for multiple targets as
well.

In the following, we fix a player $i$, and we assume that there exists
$b_i\in \N$ such that for all finite plays~$\play$,
$\TP_i(\play) \geq -b_i$. We will show how to translate the game $G$
in a game $G'$ with only non-negative weights for player $i$, with a
relationship between strategies of $i$ in both games. This will in
particular preserve the existence of Nash equilibria. The game
$G'=(\vertices',\{\target\}, (\Actions'_j)_{j\leq
  N},\edges',\Next',(\edgeweights'_j)_{j\leq N})$ is built as follows:
 \begin{itemize}
 \item
   $\vertices' = \{\target\} \uplus \vertices\times
   \{-b_i,\ldots,-1,0\}$:
   we keep the negative part of the current total-payoff in the vertex
   for player $i$, and add a fresh target vertex $\target$;
\item $\Actions'=\Actions$;
\item for all $(v,v')\in \edges$ and for all $(v,c)\in \vertices'$
  with $v\neq \target$, then, letting
  $c'= \min(0,c+\edgeweights_i(v,v'))$, if $(v',c')$ is in
  $\vertices'$, the edge $e=((v,c),(v',c'))$ is in $\edges'$,
  $\edgeweights'_i(e)= \max(0,{c}+\edgeweights_i(v,v'))$ and
  $\edgeweights'_j(e) = \edgeweights_j(v,v')$ for $j\neq i$.
  Furthermore if $\Next(v,\vec a) = v'$, then
  $\Next((v,c),\vec a) = (v',c')$;
\item for all vertices $(\target,{c})\in \vertices'$, there exists an
  edge $e= ((\target,{c}),\target)$ with
  $\edgeweights'_i (e)= -b_i + {c}$ and $\edgeweights'_j(e) = 0$ for
  $j\neq i$. For all $\vec a$,
  $\Next((\target,{c}), \vec a) = \target$.
\end{itemize}

\begin{lemma}\label{lemmaBounded}
  For all finite plays $v_1 v_2 \cdots v_kv_{k+1}$ in $G$, and
  $(v_1, 0)(v_2,{c}_2) \cdots (v_k,{c}_k)$ in $G'$,
\begin{itemize}
\item there exists $j\leq k$ such that ${c}_k = \TP_i(v_j\cdots v_k)$
  in $G$ (note that if $j=k$ this is equal to 0), and in $G'$
  \[\TP_i((v_1,c_0)(v_2,{c}_2) \cdots (v_k,{c}_k))= \TP_i(v_1\cdots
  v_j)\,,\]
\item if $v_{k}\neq \target$, there exists a unique $c$ such that
  $((v_k,{c}_k),(v_{k+1},c))$ is an edge of $G'$.
 \end{itemize}
\end{lemma}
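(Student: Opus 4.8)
The plan is to prove both items by a single induction on the length $k$ of the $G'$-play $(v_1,0)(v_2,c_2)\cdots(v_k,c_k)$, taking as invariant exactly the first item: the stored coordinate $c_k$ is the player-$i$ total-payoff of the suffix of the play since the last time the ``clamped accumulator'' was reset to $0$, and the player-$i$ payoff accumulated so far in $G'$ is the player-$i$ total-payoff of the complementary prefix. Note that the second item at a given length $k$ follows from the first item at that same length together with the standing hypothesis $\TP_i(\play)\ge -b_i$ for all finite plays $\play$ of $G$; so it is enough to maintain the first item.

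For $k=1$ the $G'$-play is $(v_1,0)$, and $j=1$ works, since $c_1=0=\TP_i(v_1\cdots v_1)$ (empty sum) and $\TP_i((v_1,0))=0=\TP_i(v_1\cdots v_1)$. For the inductive step, assume the first item for $(v_1,0)\cdots(v_k,c_k)$ with witness $j\le k$, and consider its extension $(v_1,0)\cdots(v_k,c_k)(v_{k+1},c_{k+1})$, a play of $G'$. The edge $((v_k,c_k),(v_{k+1},c_{k+1}))$ cannot be of the second kind in the definition of $\edges'$ (those enter the fresh vertex $\target$), hence it is of the first kind; in particular $v_k\neq\target$, $(v_k,v_{k+1})\in\edges$, we have $c_{k+1}=\min(0,c_k+\edgeweights_i(v_k,v_{k+1}))$, and its player-$i$ weight is $\max(0,c_k+\edgeweights_i(v_k,v_{k+1}))$. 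Set $x=c_k+\edgeweights_i(v_k,v_{k+1})$; by the induction hypothesis, $x=\TP_i(v_j\cdots v_k)+\edgeweights_i(v_k,v_{k+1})=\TP_i(v_j\cdots v_{k+1})$. If $x\le 0$, the edge has player-$i$ weight $0$, so the $G'$-payoff is still $\TP_i(v_1\cdots v_j)$, and $c_{k+1}=x=\TP_i(v_j\cdots v_{k+1})$, so $j$ remains a valid witness. If $x>0$, then $c_{k+1}=0=\TP_i(v_{k+1}\cdots v_{k+1})$, so the new witness is $j'=k+1$, and the $G'$-payoff increases by $x$, becoming $\TP_i(v_1\cdots v_j)+\TP_i(v_j\cdots v_{k+1})=\TP_i(v_1\cdots v_{k+1})=\TP_i(v_1\cdots v_{j'})$ by additivity of $\TP_i$ along a path. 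In both cases the first item holds at length $k+1$. (Uniformly: since $\max(0,x)=x-\min(0,x)$, the $G'$-payoff always becomes $\TP_i(v_1\cdots v_{k+1})-c_{k+1}$, and one takes $j'=j$ or $j'=k+1$ depending on whether $\TP_i(v_j\cdots v_{k+1})\le 0$.)

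For the second item, assume $v_k\neq\target$ and let $(v_k,v_{k+1})\in\edges$ be the edge taken by the $G$-play. By the definition of $\edges'$, the only candidate successor of $(v_k,c_k)$ along this edge is $(v_{k+1},c)$ with $c=\min(0,c_k+\edgeweights_i(v_k,v_{k+1}))$, so uniqueness is clear and the only thing to check is $(v_{k+1},c)\in\vertices'$, i.e.\ $c\ge -b_i$. This is where the choice of $j$ from the first item pays off: $c_k+\edgeweights_i(v_k,v_{k+1})=\TP_i(v_j\cdots v_k v_{k+1})$ is the total-payoff of a genuine finite play of $G$, hence $\ge -b_i$ by hypothesis, so $c=\min(0,\TP_i(v_j\cdots v_k v_{k+1}))\ge -b_i$.

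The only non-routine point is this last step: nothing about the path $v_1\cdots v_{k+1}$ by itself forces the clamped accumulator to stay above $-b_i$; it is the conjunction of the suffix characterisation of $c_k$ (first item) with the global lower bound $\TP_i(\play)\ge -b_i$, applied to that suffix, that confines all reachable vertices of $G'$ to the finite set $\vertices\times\{-b_i,\ldots,0\}$ and thereby makes the construction well defined on the plays in question. Everything else is bookkeeping with $\min$, $\max$ and the decomposition $\TP_i(v_1\cdots v_j)+\TP_i(v_j\cdots v_k)=\TP_i(v_1\cdots v_k)$.
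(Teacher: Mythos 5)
Your proof is correct and follows essentially the same route as the paper's: an induction on the length of the play establishing that $c_k$ is the total-payoff of the suffix since the last reset (the paper compresses this to the observation that each $c_j$ is either $0$ or $c_{j-1}+\edgeweights_i(v_{j-1},v_j)$), and then the second item is deduced from the first by applying the hypothesis $\TP_i(\play)\ge -b_i$ to the genuine finite play $v_j\cdots v_kv_{k+1}$ of $G$. Your write-up is simply a more detailed and careful version of the paper's two-line argument.
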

\begin{proof}
  The first point is proven by induction as for all $j$, $c_j$ is
  either equal to $0$ or to $c_{j-1}+ \edgeweights(v_{j-1},v_j)$.  The
  second point is a consequence of the first. As $c_k$ is the weight
  of a partial play ending in $v_k$,
  $c_k+ \edgeweights(v_{k},v_{k+1})$ is the weight of a partial play
  ending in $v_{k+1}$ thus
  $\min(0, c_k+ \edgeweights(v_{k},v_{k+1}) ) \geq b_i$.
\end{proof}

As a consequence, for all plays or finite plays $\pi = v_1 v_2 \cdots$
in $G$, there exists a unique play or finite play
$\bar \pi = (v_1,0) (v_2,c_2) \cdots$ in $G'$ such that if
$\Next(v_i,\vec a) = v_{i+1}$ then
$\Next'((v_i,c_i),\vec a)= (v_{i+1},c_{i+1})$. Following this, one can
map every strategy $\strategy$ in $G$ to a strategy $\bar \strategy$
of the same player in $G'$ satisfying
$\bar \strategy (\bar \pi) = \strategy(\pi)$ for all finite plays
$\pi$. Furthermore, for all strategies $\strategy$ in $G'$, there
exists a unique strategy $\strategy^\star$ for the same player in $G$
such that $\strategy^\star (\pi) = \strategy(\bar \pi)$, for all
finite plays $\pi$.

\begin{proposition}
  \begin{enumerate}
  \item Let $\vec \strategy$ be a profile of strategies in $G$ and
    ${\vec {\bar \strategy}}$ its image in $G'$. Then, for each
    initial vertex $v$,
    $\cost_i(\outcomes((v,0),{\vec {\bar \strategy}})) =
    \cost_i(\outcomes(v,{\vec \strategy})) - b_i$
    and
    $\cost_j(\outcomes((v,0),{\vec {\bar \strategy}})) =
    \cost_j(\outcomes(v,{\vec \strategy}))$ for all $j\neq i$.
  \item Let $\vec \strategy$ be a profile of strategies in $G'$ and
    ${\vec {\strategy^\star}}$ its image in $G$. Then, for each
    initial vertex $v$,
    $\cost_i(\outcomes(v,{\vec \strategy})) =
    \cost_i(\outcomes((v,0),\vec {\strategy^\star})) - b_i$
    and
    $\cost_j(\outcomes(v,{\vec \strategy})) =
    \cost_j(\outcomes((v,0),\vec {\strategy^\star}))$
    for all $j\neq i$.
  \end{enumerate}
\end{proposition}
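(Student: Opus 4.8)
The plan is to deduce both items from a single statement about how costs transform under the play-lifting $\pi \mapsto \bar\pi$ introduced just before the proposition, and then to verify that statement by an edge-by-edge inspection of the weights of $G'$.

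First I would settle the correspondence of outcomes. For item~1, if $\pi = \outcomes(v, \vec\strategy)$, then since $\bar\strategy(\bar\pi') = \strategy(\pi')$ for every finite play $\pi'$ and $\Next'$ copies $\Next$ on the first component, the play $\outcomes((v,0), \vec{\bar\strategy})$ is exactly the lift $\bar\pi$. For item~2, given a profile $\vec\strategy$ of $G'$ with image $\vec{\strategy^\star}$ in $G$, set $\pi = \outcomes(v, \vec{\strategy^\star})$; then $\overline{\vec{\strategy^\star}}$ and $\vec\strategy$ prescribe the same action on every lifted finite play (by definition of $\strategy^\star$), and since the outcome from $(v,0)$ only ever traverses lifted plays, $\outcomes((v,0), \vec\strategy) = \bar\pi$ as well. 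So both items reduce to: \emph{for every play $\pi = v_0 v_1 \cdots$ of $G$ with lift $\bar\pi = (v_0,0)(v_1,c_1)\cdots$, one has $\cost_j(\bar\pi) = \cost_j(\pi)$ for $j \neq i$ and $\cost_i(\bar\pi) = \cost_i(\pi) - b_i$}, with the convention $+\infty - b_i = +\infty$.

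To prove this I would first check that $\pi$ reaches $\target$ if and only if $\bar\pi$ does: every vertex $(\target, c)$ has the fresh target $\target$ as its unique successor, so $\bar\pi$ hits $\target$ exactly one step after it hits some $(\target, c)$, which is exactly when $\pi$ hits $\target$. If this never happens, both costs are $+\infty$. Otherwise let $\ell$ be least with $v_\ell = \target$, so that $\bar\pi = (v_0,0)(v_1,c_1)\cdots(v_\ell,c_\ell)\,\target^\omega$. For $j \neq i$, the internal edges $((v_m,c_m),(v_{m+1},c_{m+1}))$ with $m < \ell$ carry weight $\edgeweights_j(v_m,v_{m+1})$ and the edge $((\target,c_\ell),\target)$ carries weight $0$, hence $\cost_j(\bar\pi) = \TP_j(v_0\cdots v_\ell) = \cost_j(\pi)$. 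For player $i$, Lemma~\ref{lemmaBounded} applied to the prefix ending in $v_\ell$ yields $\TP_i\bigl((v_0,0)\cdots(v_\ell,c_\ell)\bigr) = \TP_i(v_0\cdots v_{j_0})$ for the index $j_0$ with $c_\ell = \TP_i(v_{j_0}\cdots v_\ell)$; by additivity of $\TP_i$ along $\pi$ this equals $\TP_i(v_0\cdots v_\ell) - c_\ell$, and adding the contribution $-b_i + c_\ell$ of the edge $((\target,c_\ell),\target)$ gives $\cost_i(\bar\pi) = \TP_i(v_0\cdots v_\ell) - b_i = \cost_i(\pi) - b_i$.

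The computations above are routine; the only points that require care are making the ``reaches $\target$'' equivalence exact, so that the $\cost = +\infty$ cases match on the two sides, and checking that the weight $-b_i + c_\ell$ of the terminal edge to $\target$ is precisely what cancels the stored offset $c_\ell$ and leaves the uniform shift $-b_i$ --- this is the real reason the construction of $G'$ works. (A cosmetic off-by-one in the indices of Lemma~\ref{lemmaBounded}, where $c_0$ should be $c_1 = 0$, has no bearing on the argument.)
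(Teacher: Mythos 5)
Your proof is correct and follows essentially the same route as the paper's: identify the outcome of the lifted profile with the lift $\bar\pi$ of the original outcome, split on whether the target is reached, and use Lemma~\ref{lemmaBounded} together with the weight $-b_i+c_\ell$ of the terminal edge to cancel the stored offset and produce the uniform shift $-b_i$ for player $i$ (with the weights of the other players unchanged). The only differences are presentational: you make explicit the additivity step $\TP_i(v_0\cdots v_{j_0})=\TP_i(v_0\cdots v_\ell)-c_\ell$ and the outcome correspondence for item~2, both of which the paper leaves implicit.
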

\begin{proof} 
  We prove here only the first item, the proof of the second being
  similar. Let $\pi=\outcomes(v_0,\vec \strategy) = v_0 v_1 \cdots$.
  As a consequence of the above remarks,
  ${\bar \pi} = \outcomes((v_0,0),\vec {\bar \strategy})$. Therefore,
  if $\pi$ does not reach a target then neither does $\bar \pi$, thus
  $\cost_i(\outcomes((v,0),{\vec {\bar \strategy}})) =
  \cost_i(\outcomes(v,{\vec \strategy})) = +\infty$.
  Assume now that $\pi = v_0 \cdots v_k \target \target \cdots$.  By
  definition,
  $\bar \pi = (v_0,0) \cdots (v_k,c_k) (\target,c_{k+1}) \target
  \cdots$.
  From Lemma~\ref{lemmaBounded}, there exists $j\leq k+1$ such that
  $c_{k+1} = \TP_i(v_j \cdots v_{k+1})$ and
  $\TP_i((v_1,c_0)(v_2,{c}_2) \cdots (\target,{c}_{k+1}))=
  \TP_i(v_1\cdots v_j)$.
  Thus
  $\cost_i(\bar \pi) = \TP_i(v_1\cdots v_j) + c_{k+1} -b_i =
  \cost_i(\pi) - b_i$.
  It is immediate that for all $j\neq i$,
  $\cost_j(\bar \pi) = \cost_j(\pi)$.
 \end{proof}

 As a consequence, there is a Nash equilibrium from $v$ in $G$ if and
 only if there is a Nash equilibrium from $(v,0)$ in $G'$. 
 Note that we could not have reached this result simply by shifting the weights above $0$, as we need a device to simulate the fact that the sum of the weights
 can also decrease during the computation.
 
 By applying
 this construction for all players, we can show that if there exists a
 lower bound on the total-payoff of the finite plays for all players
 (i.e.\ the hypothesis of Theorem~\ref{thm:pureNash} is fulfilled),
 then one can construct a game $G'$ with only non-negative weights
 such that there is a Nash equilibrium in $G'$ if and only if there
 exists a Nash equilibrium in $G$. From the fact that all turn-based
 MCR-games with non-negative weights have a pure Nash
 equilibrium~\cite{Pri13}, one obtains the proof of
 Theorem~\ref{thm:pureNash} in the special case where $G$ is
 turn-based (since $G'$ is also turn-based in this case).

\subsection{Characterising Nash equilibria outcomes}\label{folk}

In this section we present a very handy characterisation of Nash
equilibria that has recently been used in several
works~\cite{KLSTC12,BPS13}.  This intuitive characterisation, in the spirit of the folk theorem for repeated games, has been
formally stated in~\cite{H16}, and a more general and more involved
version can be found in~\cite{BBMU15}. Roughly speaking, this
characterisation amounts to reducing the computation of a Nash
equilibrium in an $n$-player games to the computation of the Nash
equilibria in $n$ versions of 2-player games, obtained by letting each
player of the original game play against a coalition of all the other
players. The usefulness of this technique stems from the fact that
2-player (zero-sum turn based) games have been widely studied and there
exists many algorithms and tools to solve them.

Thus, we first introduce a variant of our games, called \emph{2-player
  zero-sum MCR games}.  Such a game is very similar to a 2-player MCR
game, the only difference is while one of the players has the same
objective as in a standard game (i.e.\ reaching a target while
minimising its cost), the second player has a completely antagonistic
goal, i.e.\ either avoiding the targets or maximising the cost for the
first player.  In those games, we are interested at the infimum cost
that the first player can ensure, that we call the \emph{value} of
the game, denoted $value(G)$ for a game $G$, supposing that an initial
vertex is described in $G$.

Then, we introduce the notion of \emph{coalition games}.  Given an
MCR-game $G$, a player $i$, and a finite play~$\pi$ ending in vertex
$v$, the coalition game $G_{i,\pi}$ is the 2-player zero-sum
turned-based MCR game played on $G$ from $v$, where $i$ is the player
who wants to reach the target while minimising his costs; and his
adversary, denoted $-i$, has the same actions as the product of all
players except $i$, and its goal is antagonistic to the one of $i$.
Furthermore, to obtain a turn-based game, we assume that $-i$ chooses
its actions before $i$ (see~\cite{H16} for a formal definition).  It
matches the intuition that player $-i$ is a coalition of all players
but $i$, and whose goal is to make $i$ pay the most.

The characterisation we present works in the case of
\emph{action-visible} MCR-game, i.e.\ in a game where we assume that
the players know the actions that have been played by everyone. A
similar result holds in the general case~\cite{BBMU15}, but we do not
need it here as the games introduced in the next section are all
action-visible.  More precisely to be action visible, we assume that
for all $v,v'$, there exists at most one vector of actions $\vec a$
such that $\Next(v,\vec a) = v'$.

Now, assume an action-visible MCR-game, a player $i$ and a play
$\pi = v_1 v_2 \cdots$.  An \emph{$i$-deviation} from $\pi$ is a
finite play $\pi'=v_1 \cdots v_\ell v'$ such that if we let $\vec a$
and $\vec {a'}$ be the vectors of actions satisfying
$\Next(v_\ell,\vec a)=v_{\ell+1}$ and $\Next(v_\ell,\vec {a'})=v'$,
then $a_i\neq a'_i$ and $a_j=a'_j$ for all $j\neq i$.  Intuitively,
this means that all players have agreed to play according to $\pi$,
and an $i$-deviation describes a finite play in which player $i$ has
betrayed the other players.  One can now state the theorem
from~\cite{H16}.
\begin{theorem}\label{the:Axel}
  Let $G$ be an action-visible MCR-game and $\pi = v_1 v_2\cdots$.
  Then $\pi$ is the outcome of a Nash equilibrium, if and only if, for
  all players $i$ and for all $i$-deviations
  $\pi' = v_1 \cdots v_\ell v'$:
\[\cost_i(\pi) \leq \cost_i(\pi') + \textit{value}(G_{i,\pi'}).\]
\end{theorem}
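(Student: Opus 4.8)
The plan is to establish the equivalence by proving the two implications separately, the common engine being a \emph{retaliation} mechanism: in a Nash equilibrium with outcome~$\pi$, all players agree to follow~$\pi$, and the moment some player~$i$ betrays the coalition the other players switch to a strategy that is worst-case optimal \emph{against~$i$} in the coalition game~$G_{i,\pi'}$, where $\pi'$ is the finite prefix recording the betrayal. Action-visibility is used throughout to let every player recover, from the observed vertex-history, both \emph{who} has deviated and \emph{what} the coalition is about to play next, so that the turn-based coalition game --- in which $-i$ commits its action before~$i$ --- faithfully reflects the concurrent dynamics. I would also rely on 2-player zero-sum MCR games being determined and on the maximising player owning an \emph{optimal} (value-attaining) strategy, including in the $\pm\infty$ cases.

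\medskip\noindent\textbf{From a Nash equilibrium to the inequalities.}
Let $\vec\strategy$ be a Nash equilibrium with $\outcomes(v_1,\vec\strategy)=\pi$, fix a player~$i$ and an $i$-deviation $\pi'=v_1\cdots v_\ell v'$, and assume towards a contradiction that $\cost_i(\pi)>\cost_i(\pi')+\textit{value}(G_{i,\pi'})$. I would produce a profitable deviation $\strategy'_i$: follow $\strategy_i$ along $v_1\cdots v_\ell$ (so the others, obeying $\strategy_{-i}$, reproduce $\pi$); at $v_\ell$ play the action witnessing the $i$-deviation --- legal since it agrees with the $\pi$-action on all coordinates $j\neq i$ --- reaching $v'$; and from $v'$ on play a minimiser strategy that is $\varepsilon$-optimal in $G_{i,\pi'}$ against the now fixed coalition behaviour $\strategy_{-i}$. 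Since the game is action-visible and $\strategy_{-i}$ is known, player~$i$ can predict the coalition's current move, hence play here exactly as well as in the turn-based game $G_{i,\pi'}$ and secure cost at most $\cost_i(\pi')+\textit{value}(G_{i,\pi'})+\varepsilon$. Choosing $\varepsilon$ small enough (or any finite witness when $\cost_i(\pi)=+\infty$, or any strategy at all when the value is $-\infty$) contradicts $\vec\strategy$ being an equilibrium; hence the inequality holds for every~$i$ and every $i$-deviation.

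\medskip\noindent\textbf{From the inequalities to a Nash equilibrium.}
Conversely, I would define a profile $\vec\strategy$ with outcome $\pi$: each $\strategy_j$ plays the $\pi$-prescribed action as long as the history is a prefix of $\pi$; at the first vertex $v_\ell$ where the play leaves $\pi$, action-visibility singles out a unique action vector, hence a unique deviating index~$i$ and a unique $i$-deviation $\pi'=v_1\cdots v_\ell v'$, and thereafter each $\strategy_j$ with $j\neq i$ plays its component of an optimal maximising strategy $\tau_{-i}$ for the coalition in $G_{i,\pi'}$ (with $\strategy_i$ arbitrary after its own first deviation). To verify the equilibrium condition, fix~$i$ and an arbitrary $\strategy'_i$ and let $\pi^\star$ be the resulting outcome: if $\pi^\star=\pi$ nothing changes, and otherwise $\pi^\star$ leaves $\pi$ first at some $v_\ell$ where, since only $\strategy_i$ was altered, the other players still play their $\pi$-actions, so the first divergence is an $i$-deviation $\pi'$; from $v'$ on the coalition follows $\tau_{-i}$, so the cost $i$ incurs in the $G_{i,\pi'}$-subgame is at least $\textit{value}(G_{i,\pi'})$ (a fortiori in the concurrent game, where $i$ plays without seeing the coalition's current move). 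Adding the cost $\cost_i(\pi')$ of the diverging prefix and invoking the hypothesis yields $\cost_i(\pi^\star)\geq\cost_i(\pi')+\textit{value}(G_{i,\pi'})\geq\cost_i(\pi)$, with the sub-cases in which the target is reached inside $v_1\cdots v_\ell$ or exactly at $v'$, or in which $\cost_i(\pi)=+\infty$, absorbed by the conventions ``$\textit{value}=0$ from the target'' and ``$+\infty\geq x$''. So no player gains by deviating, and $\pi$ is the outcome of a Nash equilibrium.

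\medskip\noindent\textbf{Expected main obstacle.}
The combinatorial bookkeeping --- pinning down the unique deviator, stitching together the memoryful strategies, the target sub-cases --- is routine. The delicate part, where I expect most of the effort to go, is making the retaliation argument rigorous in full generality: that 2-player zero-sum MCR games are determined and that the maximising coalition \emph{attains} the value with a genuine strategy rather than only $\varepsilon$-optimally (otherwise a tight inequality could be defeated by a one-shot deviation), together with the exact reason why the intra-round turn order can be fixed so that the coalition commits first precisely under action-visibility. For these points I would appeal to the earlier analysis of MCR games and to the treatment of this characterisation in~\cite{H16} (and~\cite{BBMU15} for the action-invisible variant).
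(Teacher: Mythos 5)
The paper does not prove this theorem: it is imported verbatim from~\cite{H16} (with the general version in~\cite{BBMU15}), so there is no in-paper proof to compare against. Your sketch is exactly the standard deviation/retaliation argument that underlies that result --- profitable unilateral deviation via an $\varepsilon$-optimal reply against the fixed coalition strategies for the ``only if'' direction, and a punishment profile built from optimal coalition strategies in the games $G_{i,\pi'}$ for the ``if'' direction --- and it looks sound. You correctly isolate the two external ingredients that the paper also leans on implicitly: determinacy of 2-player zero-sum turn-based MCR games together with the existence of genuinely optimal (not merely $\varepsilon$-optimal) strategies for the maximising coalition, and the fact that a known deterministic coalition strategy can be ``predicted'' by the deviator, which justifies replacing the concurrent interaction by the turn-based coalition game in which $-i$ commits first. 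One small simplification you could add: in the ``if'' direction the hypothesis automatically rules out $\textit{value}(G_{i,\pi'})=-\infty$ (since $\cost_i(\pi)$ is never $-\infty$), so optimal maximiser strategies are only needed in the finite and $+\infty$ cases, where they are known to exist.
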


In other words, this theorem allows us to say that a Nash equilibrium can be
characterised by
\begin{inparaenum}[(i)]\item 
  a play that all players agree to follow; and 
\item a set of coalition strategies that the faithful players will
  apply in retaliation if one player deviates.
\end{inparaenum}
It also provides a heuristic to construct a Nash equilibrium by
solving a sequence of $2$-player zero-sum turn-based games. It works
as follows:
\begin{inparaenum}[(i)]
\item compute for each player $i$, a strategy~$\strategy_i$ ensuring
  the least possible cost against a coalition of all other players;
\item consider the outcome $\pi$ of the profile
  $\vec \strategy=(\strategy_1,\ldots,\strategy_N)$; 
\item check that all deviations satisfy the above property; and
\item if it is the case, compute coalition strategies in case of a
  deviation.
\end{inparaenum}

Note that this construction does not always work, as the outcome $\pi$
could fail to satisfy the property of Theorem~\ref{the:Axel}, but it
has been proved to always succeed in many known classes of
games~\cite{BPS13,Pri13}.  We use this technique in the case study, as
described in the following section.


\section{Modelling}
In this section, we model the CASSTING
case study described in the introduction via a concurrent MCR
game. Recall that the problem consists of:
\begin{inparaenum}
\item a group of houses $H=\{H_1,H_2,\cdots,H_N\}$ in a cluster with
  solar panels;
\item a production function giving the (probabilistic) distribution of
  amount of solar energy produced throughout the day;
\item a list of tasks that the houses need to perform throughout the
  day.
\end{inparaenum}
For the sake of modelling, we divide each day into 15 minutes
intervals. Thus, we have 96 time intervals. We take the production
function $prod:[1,96]\rightarrow \Z$ giving the production of
energy from each house at any given time interval within the day. We
assume that for each house there is a local controller and one global
controller for all the houses together. The local controller submits a
list of tasks along with favourable time interval (deadline) of the
day for the task, and receives a schedule indicating which task to
perform when. The global controller gathers a list of tasks from
houses and computes a schedule of the tasks so that it achieves the
goal; it also sends the respective schedules to the local
controllers. Note that the local controllers have no information about
other houses and their consumption.

\subsection{Tasks}
We assume that, at the beginning of the day, each house submits a list
of tasks that should be performed at preferred interval of time. Let
the list of tasks be given as $T=\{T_1,T_2,\ldots,T_m\}$ for some
$m$. Without loss of generality, we assume that each task can be
completed within only one time interval. The energy consumed by a task
during each interval is given by the function
$E_T:T\rightarrow\Z$. The task list submitted by each house $H_i$ is
of the form
$TL_i=(\langle t_1,I_1\rangle,\langle t_2,I_2\rangle,\ldots,\langle
t_k,I_k\rangle)$
where $t_i\in T$ and $I_i$ is an interval of $[1,96]$ for all
$i\in \{1,\ldots,k\}$. For the sake of simplicity, we suppose that
lists of tasks of different houses are disjoint. We denote by
$Tasks(TL_i)=\{t_1,t_2,\ldots,t_k\}$, the tasks in the task list of
house $H_i$. The goal of the houses is to complete each task within
the given preferred interval and minimise
\begin{enumerate}
\item the overall imported energy consumption of all the houses;
\item as well as the bills of each house (the billing functions is
  described hereunder).
\end{enumerate}

\begin{example}
  Consider the scenario with two houses $H_1$ and $H_2$. Let consider
  only two time intervals and the production function $Prod(1)=4$ and
  $Prod(2)=2$. Thus at first interval both houses produce $4$ units of
  energy and at second interval both houses produce $2$ units of
  energy. Let the tasklist of $H_1$ be $(\langle t_1,[1,2]\rangle)$
  and that of $H_2$ be $(\langle t_2,[1,2]\rangle)$ where the energy
  required for tasks are $E_T(t_1) = 4$, $E_T(t_2)=5$.
\end{example}

\subsection{Concurrent MCR game to minimise the energy}

We first consider our primary goal being to minimise the amount of
imported energy used during the day. To model this situation, we use
a concurrent MCR game $\game$ with $N$ players representing the local
controllers of each house, as follows:
\begin{itemize}
\item $\vertices = [1,96]\times(\prod_{i\in H}2^{Tasks(TL_i)})$
  contains the current timeslot and the set of tasks already performed
  in the past;
\item $\Targets = [1,96]\times(\prod_{i\in H}Tasks(TL_i))$ describe
  that every task has been performed;
\item $A_i = Tasks(TL_i)$ is the set of tasks, for all players $i$;
\item $\edges = \{((d,p),(d+1,p'))\}$ with $p\subseteq p'$;
\item
  $\Next((d,p),(p_1,p_2,\ldots,p_N))=(d+1,p\cup p_1\cup p_2\cup \cdots
  \cup p_N)$
  if all tasks of $p_i$ are associated to an interval including $d$ in
  the task list $TL_i$ (other actions are not fireable);
\item $\edgeweights_i$ is defined as $E_T$ for all tasks of house
  $H_i$ performed in the current time, while taking into account the
  solar energy production, i.e.\
  $\edgeweights_i((d,p),(d+1,p')) = \sum_{t\in (p'\setminus p) \cap
    Tasks(TL_i)} E_T(t) - prod(d)$.
  A negative weight implies a use of energy produced outside the house
  (either by other houses or outside the local grid), while positive
  weight induces an excedent of solar energy in the house.
\end{itemize}
Note that by construction $\game$ is an acyclic graph (always
incrementing the interval component of the vertex). We will consider
thereafter only this game starting in the initial vertex
$v_0 = (1,\emptyset,\ldots,\emptyset)$.

For each edge, the sum of the weights incurred by all the houses
represent the amount (positive or negative) of solar energy excedent
after the perfomance of all the tasks of the current time. Since we
want to reduce the amount of energy bought from outside the local
grid, we use as a global weight function the negative part of this sum
of weights:
$\edgeweights ((d,p),(d+1,p')) = \min\left(0,\sum_{t\in p'\setminus p}
  E_T(t) -N\times prod(d)\right)$.
A negative weight implies a use of energy produced outside the local
grid, while a null weight induces an excedent of solar energy in the
local grid.

We decide first to interpret the previous game as a one-player game,
by supposing that all houses play in coalition to achieve the common
target of finishing all the tasks within the given interval and
reducing the usage of non-solar energy. This is a one-player MCR game
with the weight function $\edgeweights$. The coalition strategy
obtained will actually be a schedule for the tasks of each house
respecting the intervals that minimises the amount of non-solar energy
used (or even maximise the solar energy excedent produced by the local
grid to be exported).

\begin{example}
  \label{opt}
  For the example developed above, the optimal schedule in the
  coalition game is to perform task $t_2$ at interval 1 and perform
  task $t_1$ at interval 2 in which case no energy from outside is
  required. On the contrary, an excedent of 3 units of energy is
  produced which can sold to the external grid causing lowering of
  electric bill.
\end{example}

\subsection{Billing function}

Even though the schedule obtained from $\game$ gives the optimal use
of non-solar energy and hence a priori low billing costs as a whole
for the houses, the bill obtained may not be favourable for the houses
taken individually. Thus, all houses might not have a strong incentive
to collaborate to the \emph{common good}. We start by defining
properly the billing function we use in our model. 

Given a tuple of sets of tasks performed by each house at any specific
time point $d\in\{1,\ldots,96\}$, we will compute the bill incurred by
house $H_i$ on the interval $[d,d+1)$. The total bill for $H_i$ would
then be the sum of bill incurred by this house for each interval of
the day.

Consider a tuple of set of tasks performed by all the houses at a
specific time point $d$,
$T_P = (\langle
t^1_1,t^1_2,\allowbreak\cdots,t^1_{k_1}\rangle,\cdots,\langle
t^N_1,t^N_2,\cdots,t^N_{k_N}\rangle)$.
We denote the tasks performed by $H_i$ as $Tasks_i(T_P)$. Let the
price of buying energy from other houses be $P_{in}$ and the price of
buying energy from outside be $P_{out}$. The energy produced by each
house is given by $prod(d)$.  Now, for each house $H_i$, the excess
energy used by the house is given by
$\sum_{t\in Tasks_i(T_P)} E_T(t)-prod(d)$. Thus, the total energy
bought by all the houses individually (either from the local grid or
from the outside) is
$Tot_C = \sum_i \max\left(0,\sum_{t\in Tasks_i(T_P)}
  E_T(t)-prod(d)\right)$.
On the other hand, the energy bought (negative or positive) from
outside the grid is
$Tot_O = \sum_i \sum_{t\in Tasks_i(T_P)} E_T(t) - N\times prod(d)$.
The total bill for all the houses is then
$B_{Tot} = (Tot_C - Tot_O)\times P_{in} + Tot_O\times P_{out}$. Since
each house pays its own share of this total bill, the price that will
be billed to house $H_i$ is
$\edgeweights^B_i(T_P) = B_{Tot}/Tot_C \times \sum_{t\in Tasks_i(T_P)}
E_T(t)-prod(d)$.
Note that if a house produces more energy at a specific interval than
it consumes, the bill is negative signifying income from selling the
excess solar energy.

Now that we have the billing function fixed, we can present the
example where the optimal energy schedule might not give the minimum
bill for an individual house.

\begin{example}
  The optimal schedule for the total energy presented in the previous
  example (Example~\ref{opt}) is not optimal with respect to the bill paid by house
  $H_1$. For example, if $H_1$ performs task $t_1$ at interval 1, it
  does not have to pay anything. Whereas, with optimal scheduling,
  $H_1$ has to pay for two units of energy to $H_2$ and receives the
  price of only one unit of energy from $H_2$.
\end{example}

Thus, our next goal will be to modify the weights of the game $\game$
to take into account the bill rather than the energy. The new weight
function is now given by
$\edgeweights_i ((d,p),(d+1,p')) = \edgeweights^B_i(T_P)$ where $T_P$
is the list of tasks performed in $p'\setminus p$. We call $\game'$
this new game.
The hope is to find that the need for the households to minimise their
utility bill is an \emph{incentive} to minimise the global energy
consumption from the grid (thereby encouraging sharing of locally
produced energy). More formally, we need to compare the energy
consumed by a Nash equilibrium of $\game'$ to the optimal energy
consumption found in the optimal coalition strategy of $\game$.

As the game is concurrent there is in general no Nash equilibrium.
Therefore we start by transforming $\game'$ in a turn-based game
$\game'_t$, adopting a round-Robin policy for the choice of actions.
This can be achieved by enhancing the set of vertices with
$\{1,\ldots,N\}$, and decomposing an edge into a sequence of $N$
edges, where each house now plays in turns. In the last step, we have
all the information to compute the bill for each house.
%
%
Since the game $\game'_t$ is acyclic, there are only finitely many
plays, thus their costs for each players are bounded. As a consequence
of Theorem~\ref{thm:pureNash}, we know that there exists (pure) Nash
equilibria in $\game'_t$.\footnote{Notice that we could also obtain
  this result directly from the fact that every acyclic turn-based
  game has a Nash equilibrium.} Thus we can follow the
heuristic for constructing Nash equilibria presented in Section~\ref{folk} to
construct the Nash equilibrium strategy profile.  We construct
coalition two-player MCR games $\game'_i$ for each house $H_i$, where
$H_i$ plays in order to minimise its bill against the coalition of all
other houses.  Solving every such game $\game'_i$, we obtain the
optimal strategy $\strategy_i$ for each house $H_i$.  In addition to that, we follow the construction by detecting when a player deviates from its optimal
strategy and then changing other players' strategy to a punishment
strategy.  

From the point of view of the case study, even though the
strategies $(\strategy_i)_{1\leq i\leq N}$ are generated by the global
controller, they are executed by local controllers and thus, each
house can not detect whether some other house has deviated from its
optimal strategy or not. Hence, for our case, we only take the
strategy profile $(\strategy_i)_{1\leq i\leq N}$ (without the
deviation punishment) and, while computing the bill, we add the
provision for the global controller to add a penalty to the bill. This
is done by modifying the weight function to incorporate such changes:
we add an extra integer to the bill of house $H_i$ that is equal to
the minimum bill that can be ensured by $H_i$ according to the
strategy $\strategy_i$ whenever $H_i$ deviates from
$\strategy_i$. This ensures that any deviation from $H_i$ will result
in at least twice the minimal bill that can be ensured by $H_i$.


\section{Implementation}

We implemented the model using PRISM. PRISM has introduced a module for
solving \emph{(turn-based) Stochastic Multi-Player Games (SMG)}. We
use this module in order to solve different non-stochastic games
and extract optimal cost strategies out of them. The PRISM module is
also used to check the performance (consumption, wastage and bill) of
a strategy over an instance of the game.

We have first implemented the one player game version of an instance
where all houses play in coalition towards the common goal of
maximising the utilisation of solar energy. Here, the behaviour of
each house is modeled using a module in the PRISM representation. Each
module contains the constraints of the houses with respect to tasks as
transitions. The favorable interval of the task is denoted as guards
on the transition and the energy cost for the task is reflected using
an update to the global energy variable. We solve the game and obtain
a bound of the maximum possible utilisation of the solar energy among
all the houses. Note that, as shown by the example in the previous
section, this schedule does not ensure that the bill paid by each of
the houses is minimum. We allow PRISM to solve such a one-player game
to figure out the minimum possible collective energy requirement of
the houses ($E_{min}$).

Next we have implemented the methodology with multi-player turn-based
MCR games. Recall that the houses do not have information about
consumption and requirement of energy by other houses. The natural way
of modelling such scenarios is through concurrent games where each
player plays a move without the knowledge of other players
moves. Since PRISM can handle only turn-based games, we try to
implement a random order among the houses at each step of the game. We
then compute the separate games $\game'_i$ for each house $H_i$ and
find optimal strategy $\strategy_i$ for house $H_i$ such that the bill
for $H_i$ is minimised ($bill_i$). Even though generating strategy is
included in PRISM, it does not allow storing the strategy output in a
proper format for further usage from the command interface. We
modified it to include that property. The outcome of this strategy
profile $(\strategy_i)_{1\leq i\leq N}$ is then used to compute the
final strategy for the controller. Finally, we formulate another game
where any deviating move by house $H_i$ from $\strategy$ contains a
modification of the billing function of $H_i$ as an addition of
integer value equal to $bill_i$. At the end, the final strategy is
loaded in PRISM and the values (energy consumption, billing\ldots)
corresponding to the strategy are computed. The final game with the
strategy profile $(\strategy_i)_{1\leq i\leq N}$ again results in
various different values for total collective energy consumption, and
bills for each house $H_i$ for completing all the tasks. These values
are compared with the original game to compare the performance of the
strategy profile. The table below shows the result for different
numbers of houses and tasks. For each such pair, we have taken 10
examples and presented the average of values obtained. The table
represents the average difference of bills (in percent) between two strategies -
one where the houses collectively reduce the total energy consumed in
coalition and the other where the houses minimizes their own bill. Note that
the bill is computed for each house by taking into account the cost of
excess energy used in any interval. However, the amount each
house gets from excessive production of energy is not accounted for in the bill.
\begin{table}[t]
  \centering
  \footnotesize
    \begin{tabular}{|c|c|c||c|c|}
\hline
    Houses&Tasks&Number of cases&Total energy difference&Average bill difference\\ \hline\hline
    2&3&10&0.0&-8.08\\ \hline
    2&4&10&0.0&-17.15\\ \hline
    3&2&10&0.0&-13.07 \\ \hline
    3&3&10&0.0&-29.73 \\ \hline
    4&2&10&0.0&-14.89 \\ \hline
    \end{tabular}
  \caption{Results of the implementation over the case study}\label{table}
\end{table}
As shown in Table~\ref{table}, the collective energy with the strategy
profile obtained, remains the same as the minimum energy required to
complete all the tasks. Moreover, the result shows that on average
there is a decrease in bill paid by each house in the case where every
house follows the strategy profile and does not deviate from it. This
also shows that there is (hopefully) less inclination towards
deviating from the suggested strategy by each house.


%

\bibliographystyle{eptcs}
\bibliography{biblio}

\begin{thebibliography}{10}
\providecommand{\bibitemdeclare}[2]{}
\providecommand{\surnamestart}{}
\providecommand{\surnameend}{}
\providecommand{\urlprefix}{Available at }
\providecommand{\url}[1]{\texttt{#1}}
\providecommand{\href}[2]{\texttt{#2}}
\providecommand{\urlalt}[2]{\href{#1}{#2}}
\providecommand{\doi}[1]{doi:\urlalt{http://dx.doi.org/#1}{#1}}
\providecommand{\bibinfo}[2]{#2}

\bibitemdeclare{inproceedings}{BCDFLL07}
\bibitem{BCDFLL07}
\bibinfo{author}{Gerd \surnamestart Behrmann\surnameend},
  \bibinfo{author}{Agn{\`{e}}s \surnamestart Cougnard\surnameend},
  \bibinfo{author}{Alexandre \surnamestart David\surnameend},
  \bibinfo{author}{Emmanuel \surnamestart Fleury\surnameend},
  \bibinfo{author}{Kim~G. \surnamestart Larsen\surnameend} \&
  \bibinfo{author}{Didier \surnamestart Lime\surnameend}
  (\bibinfo{year}{2007}): \emph{\bibinfo{title}{UPPAAL-Tiga: Time for Playing
  Games!}}
\newblock In: {\sl \bibinfo{booktitle}{Proc. 19th Int. Conf. Computer Aided
  Verification ({CAV}'07)}}, \bibinfo{publisher}{Springer}, pp.
  \bibinfo{pages}{121--125}, \doi{10.1007/978-3-540-73368-3_14}.

\bibitemdeclare{article}{BBMU15}
\bibitem{BBMU15}
\bibinfo{author}{Patricia \surnamestart Bouyer\surnameend},
  \bibinfo{author}{Romain \surnamestart Brenguier\surnameend},
  \bibinfo{author}{Nicolas \surnamestart Markey\surnameend} \&
  \bibinfo{author}{Michael \surnamestart Ummels\surnameend}
  (\bibinfo{year}{2015}): \emph{\bibinfo{title}{Pure {N}ash Equilibria in
  Concurrent Deterministic Games}}.
\newblock {\sl \bibinfo{journal}{Logical Methods in Computer Science ({LMCS})}}
  \bibinfo{volume}{11}(\bibinfo{number}{2}), \doi{10.2168/LMCS-11(2:9)2015}.

\bibitemdeclare{inproceedings}{BPS13}
\bibitem{BPS13}
\bibinfo{author}{Thomas \surnamestart Brihaye\surnameend},
  \bibinfo{author}{Julie \surnamestart {De Pril}\surnameend} \&
  \bibinfo{author}{Sven \surnamestart Schewe\surnameend}
  (\bibinfo{year}{2013}): \emph{\bibinfo{title}{Multiplayer Cost Games with
  Simple Nash Equilibria}}.
\newblock In: {\sl \bibinfo{booktitle}{Proc. of the Int. Symp. Logical
  Foundations of Computer Science ({LFCS}'13)}}, pp. \bibinfo{pages}{59--73},
  \doi{10.1007/978-3-642-35722-0_5}.

\bibitemdeclare{inproceedings}{CKSW13}
\bibitem{CKSW13}
\bibinfo{author}{T.~\surnamestart Chen\surnameend},
  \bibinfo{author}{M.~\surnamestart Kwiatkowska\surnameend},
  \bibinfo{author}{A.~\surnamestart Simaitis\surnameend} \&
  \bibinfo{author}{C.~\surnamestart Wiltsche\surnameend}
  (\bibinfo{year}{2013}): \emph{\bibinfo{title}{Synthesis for Multi-Objective
  Stochastic Games: An Application to Autonomous Urban Driving}}.
\newblock In: {\sl \bibinfo{booktitle}{Proc. 10th Int. Conf. on Quantitative
  Evaluation of SysTems ({QEST}'13)}}, \bibinfo{publisher}{Springer}, pp.
  \bibinfo{pages}{322--337}, \doi{10.1007/978-3-642-40196-1_28}.

\bibitemdeclare{inproceedings}{CFK+13}
\bibitem{CFK+13}
\bibinfo{author}{Taolue \surnamestart Chen\surnameend},
  \bibinfo{author}{Vojt\v{e}ch \surnamestart Forejt\surnameend},
  \bibinfo{author}{Marta \surnamestart Kwiatkowska\surnameend},
  \bibinfo{author}{David \surnamestart Parker\surnameend} \&
  \bibinfo{author}{Aistis \surnamestart Simaitis\surnameend}
  (\bibinfo{year}{2013}): \emph{\bibinfo{title}{{PRISM-games}: A Model Checker
  for Stochastic Multi-Player Games}}.
\newblock In: {\sl \bibinfo{booktitle}{Proc. 19th Int. Conf. Tools and
  Algorithms for the Construction and Analysis of Systems ({TACAS}'13)}},
  \bibinfo{publisher}{Springer}, pp. \bibinfo{pages}{185--191},
  \doi{10.1007/978-3-642-36742-7_13}.

\bibitemdeclare{inproceedings}{DGJLR10}
\bibitem{DGJLR10}
\bibinfo{author}{Alexandre \surnamestart David\surnameend},
  \bibinfo{author}{Jacob \surnamestart Deleuran~Grunnet\surnameend},
  \bibinfo{author}{Jan~J. \surnamestart Jessen\surnameend},
  \bibinfo{author}{Kim~G. \surnamestart Larsen\surnameend} \&
  \bibinfo{author}{Jacob~I. \surnamestart Rasmussen\surnameend}
  (\bibinfo{year}{2012}): \emph{\bibinfo{title}{Application of Model-Checking
  Technology to Controller Synthesis}}.
\newblock In: {\sl \bibinfo{booktitle}{Proc. 9th Int. Symp. Formal Methods for
  Components and Objects ({FMCO}'12)}}, \bibinfo{publisher}{Springer}, pp.
  \bibinfo{pages}{336--351}, \doi{10.1007/978-3-642-25271-6}.

\bibitemdeclare{inproceedings}{DJLMT15}
\bibitem{DJLMT15}
\bibinfo{author}{Alexandre \surnamestart David\surnameend},
  \bibinfo{author}{Peter~G. \surnamestart Jensen\surnameend},
  \bibinfo{author}{Kim~G. \surnamestart Larsen\surnameend},
  \bibinfo{author}{Marius \surnamestart Mikucionis\surnameend} \&
  \bibinfo{author}{Jakob~H. \surnamestart Taankvist\surnameend}
  (\bibinfo{year}{2015}): \emph{\bibinfo{title}{Uppaal Stratego}}.
\newblock In: {\sl \bibinfo{booktitle}{Proc. 21st Int. Conf. Tools and
  Algorithms for the Construction and Analysis of Systems ({TACAS}'15)}},
  \bibinfo{publisher}{Springer}, pp. \bibinfo{pages}{206--211},
  \doi{10.1007/978-3-662-46681-0_16}.

\bibitemdeclare{unpublished}{H16}
\bibitem{H16}
\bibinfo{author}{Axel \surnamestart Haddad\surnameend} (\bibinfo{year}{2016}):
  \emph{\bibinfo{title}{Characterising Nash Equilibria Outcomes in Fully
  Informed Concurrent Games}}.
\newblock \bibinfo{note}{Available at
  \url{http://web1.ulb.ac.be/di/verif/haddad/H16.pdf}}.

\bibitemdeclare{inproceedings}{KLSTC12}
\bibitem{KLSTC12}
\bibinfo{author}{Miroslav \surnamestart Klimo{\v s}\surnameend},
  \bibinfo{author}{Kim~G. \surnamestart Larsen\surnameend},
  \bibinfo{author}{Filip \surnamestart {\v S}tefa{\v n}{\'a}k\surnameend} \&
  \bibinfo{author}{Jeppe \surnamestart Thaarup\surnameend}
  (\bibinfo{year}{2012}): \emph{\bibinfo{title}{Nash Equilibria in Concurrent
  Priced Games}}.
\newblock In: {\sl \bibinfo{booktitle}{Proc. 6th Int. Conf. Language and
  Automata Theory and Applications ({LATA}'12)}},
  \bibinfo{publisher}{Springer}, pp. \bibinfo{pages}{363--376},
  \doi{10.1007/978-3-642-28332-1_31}.

\bibitemdeclare{inproceedings}{LMT15}
\bibitem{LMT15}
\bibinfo{author}{Kim~G. \surnamestart Larsen\surnameend},
  \bibinfo{author}{Marius \surnamestart Mikucionis\surnameend} \&
  \bibinfo{author}{Jakob~H. \surnamestart Taankvist\surnameend}
  (\bibinfo{year}{2015}): \emph{\bibinfo{title}{Safe and Optimal Adaptive
  Cruise Control}}.
\newblock In: {\sl \bibinfo{booktitle}{Proc. Symp. in Honor of
  Ernst-R{\"{u}}diger Olderog on Correct System Design}},
  \bibinfo{publisher}{Springer}, pp. \bibinfo{pages}{260--277},
  \doi{10.1007/978-3-319-23506-6}.

\bibitemdeclare{article}{Nash}
\bibitem{Nash}
\bibinfo{author}{John~F. \surnamestart Nash~Jr.\surnameend}
  (\bibinfo{year}{1950}): \emph{\bibinfo{title}{Equilibrium Points in N-Person
  Games}}.
\newblock {\sl \bibinfo{journal}{Proceedings of the National Academy of
  Science}} \bibinfo{volume}{36}, pp. \bibinfo{pages}{48--49},
  \doi{10.1073/pnas.36.1.48}.

\bibitemdeclare{phdthesis}{Pri13}
\bibitem{Pri13}
\bibinfo{author}{Julie \surnamestart de~Pril\surnameend}
  (\bibinfo{year}{2013}): \emph{\bibinfo{title}{Equilibria in Multiplayer Cost
  Games}}.
\newblock Ph.D. thesis, \bibinfo{school}{Universit{\'e} de Mons}.

\end{thebibliography}
\end{document}